\newcommand{\C}{\mathbb{C}}
\newcommand{\Z}{\mathbb{Z}}
\DeclareMathOperator{\Prob}{\bf P}
\newcommand{\GL}{\operatorname{GL}}
\numberwithin{equation}{section}
\newtheorem{theorem}[equation]{Theorem}
\newtheorem{lemma}[equation]{Lemma}
\newtheorem{proposition}[equation]{Proposition}
\theoremstyle{definition}
\newtheorem{definition}[equation]{Definition}
\theoremstyle{remark}
\newtheorem{remark}[equation]{Remark}
\definecolor{darkgreen}{rgb}{0,0.5,0}
\begin{document}

\title{The Dihedral Hidden Subgroup Problem}

\author{Imin Chen}

\address{Department of Mathematics, Simon Fraser University\\
Burnaby, BC V5A 1S6, Canada } \email{ichen@sfu.ca}

\author{David Sun}

\address{Department of Mathematics, Simon Fraser University\\
Burnaby, BC V5A 1S6, Canada } \email{david\_sun\_2@sfu.ca}

\date{\today}

\keywords{Quantum computation, hidden subgroup problem.}
\subjclass[2010]{}

\thanks{This work was supported by a NSERC Discovery Grant (I.C.) and NSERC USRA (D.S.).}

\begin{abstract}
We give an overview of the dihedral hidden subgroup problem (DHSP) as approached by the `standard' hidden subgroup quantum algorithm for finite groups, highlighting the obstructions for strong Fourier sampling to succeed, and summarizing known approaches and results.

We then prove a number of no-go results for the dihedral coset problem (DCP), motivated by a connection between DCP and cloning of quantum states.
\end{abstract}

\maketitle

\tableofcontents

\section{Introduction}

Let $G$ be a finite group and $H$ a hidden subgroup of $G$.  A function $f : G \rightarrow \C$ which is constant on left $H$-cosets and takes distinct values on distinct left $H$-cosets is called a separating function for the subgroup $H$.

The hidden subgroup problem (HSP) is the problem of finding generators for the hidden subgroup $H$, given access to evaluations of a separating function $f$ for $H$. This problem can be solved in polynomial time using a quantum computer when $G$ is an abelian group and has been extensively studied for many classes of finite groups \cite{kitaev} \cite{GSVV} \cite{HRT}. 

Many problems can be cast in terms of the HSP. For instance, Shor's integer factorization algorithm can be described in terms of the HSP for cyclic groups \cite{kitaev} \cite{shor}. A polynomial time quantum algorithm for solving the hidden subgroup problem on dihedral groups would imply a polynomial time quantum algorithm to solve certain hard lattice problems which are considered intractable using classical computers \cite{regev}. Another example is the HSP on the symmetric group which can be used to solve the graph isomorphism problem \cite{beals} \cite{boneh-lipton} \cite{ettinger-hoyer-4} \cite{hoyer}.

Though the dihedral group is one of the simplest non-abelian groups, from the point of view of the HSP, it has remained a difficult case. A survey of known results about the HSP for dihedral groups can be found in \cite{kobayashi} \cite{HRT} \cite{childs}, where we note that the best known quantum algorithms are currently subexponential \cite{kuperberg} \cite{regev-2} \cite{kuperberg-2}. 

The latter subexponential algorithms have applications to constructing isogenies between elliptic curves over finite fields \cite{childs-jao-soukharev} \cite{biasse-jao-sankar}. In \cite{BKSW}, it is shown that the learning with errors problem (LWE) is quantum polynomial time equivalent to an extrapolated version of the dihedral coset problem. Both the problem of constructing isogenies between supersingular elliptic curves and LWE form the basis for many proposed post-quantum key exchanges, therefore the quantum hardness of the HSP for groups like the dihedral group becomes a critical question.

In this paper, we review the standard HSP algorithm as it applies to the dihedral groups $D_N$ and detail the obstructions for this algorithm to succeed in this case. On the other hand, we explain how the standard HSP algorithm yields the polynomial query complexity result of \cite{ettinger-hoyer}. We also describe other approaches to the HSP for dihedral groups, such as optimal measurements and its relations to the subset sum problem. 

We then prove a number of no-go theorems for the dihedral coset problem (DCP). The results yield an upper bound on the success parameter of any quantum algorithm which uses a unitary operation and then one measurement to determine the parameter $a$ with bounded probability. This can be viewed as giving a non-trivial upper bound on the success probability of the optimal measurement using $m$ coset samples to solve DCP in the case when the density $\nu = m/\log_2 N \ge 1$ and the order of the dihedral group is $2N$. 

In the last section, we describe a connection between DCP and cloning of quantum states which helped motivate the proofs of the no-go results for DCP. 

\section{Acknowledgements}

We would like to thank P.\ H\o yer for helpful comments and bringing to our attention \cite{kuperberg}. We also thank R.\ Goenka and N.\ de Silva for stimulating discussions related to the topics of this paper.

\section{The QFT for finite groups}

Let $G$ be a finite group and $\hat G$ denote a  complete set of representatives for the isomorphism classes of irreducible representations of $G$ over $\C$. For a representation $\rho \in \hat G$, let $d_\rho$ be the dimension of $\rho$. Recall the Quantum Fourier Transform (QFT) on $G$ is defined as the linear transformation
\begin{align}
  F_G & : V \rightarrow \hat V \\
  \notag F_G & := \sum_{g \in G} \sum_{\rho \in \hat G} \sum_{i,j=1}^{d_\rho} \sqrt{\frac{d_\rho}{|G|}} \rho(g)_{i,j} | \rho, i, j \rangle \langle g |,
\end{align}
where $V$ is the $\C$-vector space generated by $| g \rangle$, $g \in G$ and $\hat V$ is the $\C$-vector space generated by $| \rho, i, j \rangle$, $\rho \in \hat G, 1 \le i, j \le d_\rho$. Picking an isomorphism $V \cong \hat V$, it is a unitary operator which can be efficiently approximated using quantum circuits for many finite groups: \cite{hales-hallgren} (abelian), \cite{hoyer} (meta-cyclic), \cite{beals} (symmetric group).

\subsection{The cyclic group case} Suppose $G = C_N \cong \Z/N\Z$ the cyclic group of order $N$.  There are $N$ irreducible representations which are $1$-dimensional and given by 
\begin{align*}
  \Z/N\Z & \rightarrow \C^\times \\
  t & \mapsto \zeta_N^t.
\end{align*}
where $\zeta_N$ is a choice of $N$th root of unity.

\subsection{The dihedral group case}

Suppose $G = D_N$ is the dihedral group of order $2n$, which can be presented as
\begin{equation*}
  D_N = \langle x, y : x^n = e, y^2 = e, y x y^{-1} = x^{-1} \rangle.
\end{equation*}

If $n$ is even,  there are four $1$-dimensional representations given by
\begin{equation}
  \phi_{u,v} : x \mapsto (-1)^u, y \mapsto (-1)^v 
 \end{equation}
where $u, v \in \Z/2\Z$. These are pull backs of the four $1$-dimensional representations of $D_N/\langle x^{2} \rangle \cong C_2 \times C_2$ under the quotient homomorphism $D_N \rightarrow D_N/\langle x^{2} \rangle$, where $C_m$ denotes the cyclic group of order $m$.

If $n$ is odd, there are two $1$-dimensional representations given by $\phi_{0,v}$  where $v \in \Z/2\Z$. These are pull backs of the two $1$-dimensional representations of $D_N/\langle x \rangle \cong C_2$ under the quotient homomorphism $D_N \rightarrow D_N/\langle x \rangle$.

There are $\lfloor \frac{n-1}{2} \rfloor$ irreducible representations of dimension $2$ given by
\begin{align}
\label{complex-basis}
  \rho_k: D_N & \rightarrow \GL_2(\C) \\
  \notag x & \mapsto \begin{pmatrix} \omega_N^k & 0 \\ 0 & \omega_N^{-k} \end{pmatrix} \\
  \notag y & \mapsto \begin{pmatrix} 0 & 1 \\ 1 & 0 \end{pmatrix}
\end{align}
for $0 < k < \frac{n}{2}$, where $\omega_N = e^{2 \pi i/n}$. These are the induction of the representation $\psi_k : C_{n} \rightarrow \C^\times$ given by $\psi_k(x) = \omega_N^k$ from $C_n$ to $D_N$. 

The representations $\phi_{u,v}$ and $\rho_k$ form the complete list of irreducible representations of $D_N$ up to isomorphism.

\section{The standard HSP algorithm}

In the standard algorithm for finding hidden subgroups from a separating function, we perform the following steps:

We form the state
\begin{equation}
\label{uniform-state}
   \sqrt{\frac{1}{|G|}} \sum_{g \in G} |g \rangle | f(g) \rangle,
\end{equation}
where $f : G \rightarrow \C$ is the given separating function. 

This can be achieved by starting with the state $| e_G \rangle | 0 \rangle$, where $e_G$ is the identity element of $G$, then performing the following computations:
\begin{align}
  | e_G \rangle | 0 \rangle & \mapsto \sqrt{\frac{1}{|G|}} \sum_{g \in G} | g \rangle | 0 \rangle \quad \text{Apply the QFT over $G \leftrightarrow \Z/|G|\Z, e_G \leftrightarrow 0$ to first register.} \\
  & \mapsto \sqrt{\frac{1}{|G|}} \sum_{g \in G} |g \rangle | f(g) \rangle \quad \text{ Compute $f$ into second register.}
\end{align}

Measuring the second register and discarding it, we obtain a state of the form
\begin{equation}
\label{coset-samples}
  \sqrt{\frac{1}{|H|}} \sum_{h \in H} | ch \rangle. 
\end{equation}

We apply the QFT to the above state to obtain
\begin{equation}
   \sum_{\rho \in \hat G} \sqrt{\frac{d_\rho}{|G||H|}} \sum_{i,j =1}^{d_\rho}  \sum_{h \in H}  \rho(ch)_{i,j} | \rho, i, j \rangle.
\end{equation}

In the case of $G$ being an abelian group, measuring $\rho$ gives sufficient information to determine $H$ efficiently after running this process repeatedly and using post-processing \cite{kitaev}.

\subsection{The cyclic group case}

Fix an integer $N > 1$. Let $X$ be a finite set, and $G := (\mathbb{Z}/N\mathbb{Z}, +)$. Suppose that we have a function $f:G \rightarrow X$ which separates a subgroup $H \subset G$ where $H = \langle d \rangle$. Let $M := \#H$. Assume that we have a quantum machine capable of computing the unitary transformation on two registers $U_f: |x \rangle|y \rangle \rightarrow |x \rangle|f(x) \oplus y \rangle$ (recall that we can take $|x \rangle|y \rangle$ as $|x \rangle \otimes |y \rangle$).

Suppose we do not know $M$, $d$ nor $H$ and we only know $G$ and have a machine computing $f$. We want to determine a generating set for $H$, calling the "black-box" function $f$ as few times as possible. 
	
Let $F_N$ be the QFT for the cyclic group $G$. Explicitly, this is an operator on a register with $n \ge \log_2N$ qubits given by $$F_N := \frac{1}{\sqrt{N}}\sum_{j,k=0}^{N-1}\exp\left( \frac{2\pi ijk}{N}\right) |k \rangle \langle j |.$$

The $F_N$ is unitary transformation. If we let $\omega := \exp\left(\frac{2\pi i}{N}\right)$ be the primitive $N$-th root of unity, then
	$$F_N = \frac{1}{\sqrt{N}}\begin{pmatrix} 
		1 & 1 & 1 & \cdots & 1\\
		1 & \omega & \omega^2 & \cdots & \omega^{N-1}\\
		1 & \omega^2 & \omega^4 & \cdots & \omega^{2(N-1)}\\
		\vdots & \vdots & \vdots & \; & \vdots\\
		1 & \omega^{N-1} & \omega^{2(N-1)} & \cdots & \omega^{(N-1)(N-1)}\\
	\end{pmatrix}.$$
	One can check that $F_N\cdot F_N^{*} = I_{N}$ where $I_N$ is the $N \times N$ identity matrix.

We map $G = \{0,1,...,N-1\}$ onto the basis of the quantum state $\{|0 \rangle,|1 \rangle,...,|N-1 \rangle\}$. Suppose the hidden subgroup is given by $H = \{|0 \rangle, |d \rangle, |2d \rangle,...,|(M-1)d \rangle\}$. 

Computing on two registers:
\begin{flalign*}
	|0 \rangle|0 \rangle & \; \xrightarrow{\text{$F_N$ on 1st}} \; \frac{1}{\sqrt{N}}\sum_{j=0}^{N-1}|j \rangle|0 \rangle\\
	&\; \xrightarrow{\text{apply $f$}} \;\;\;\; \frac{1}{\sqrt{N}}\sum_{j=0}^{N-1}|j \rangle|f(j) \rangle.
\end{flalign*}
Note that we put $|f(j) \rangle$ inside the sum since tensor product is distributive. Measuring in $|f(j_0) \rangle$ on the second register for some $0 \leq j_0 \leq N-1$ collapses our state, leaving only those values $g \in G$ such that $f(g) = f(j_0)$ in the first register. Since $f$ separates cosets of $H$ we get (for simplicity, we now drop our second register which remains $|f(j_0) \rangle$):
\begin{flalign*}
	&\xrightarrow{\text{measure}} \; \frac{1}{\sqrt{M}}\sum_{h \in H}|j_0 + h \rangle = \; \frac{1}{\sqrt{M}}\sum_{s = 0}^{M-1}|j_0 + sd \rangle\\
	&\xrightarrow{\text{apply $F_N$}} \; \frac{1}{\sqrt{M}}\sum_{s=0}^{M-1}\frac{1}{\sqrt{N}}\sum_{k=0}^{N-1}\exp\left(\frac{2\pi i(j_0 + sd)k}{N} \right) |k \rangle\\
	&= \; \frac{1}{\sqrt{MN}}\sum_{k=0}^{N-1}\exp\left(\frac{2\pi i j_0k}{N}\right)|k \rangle \sum_{s=0}^{M-1}\exp\left(\frac{2\pi i sdk}{N} \right) \\
	& =  \frac{1}{\sqrt{d}}\sum_{t=0}^{d-1}\exp\left(\frac{2\pi i j_0 tM}{N} \right) |tM \rangle,
\end{flalign*}
using the fact that
$$\sum_{s=0}^{M-1}\exp\left(\frac{2\pi i sdk}{N} \right) = \sum_{s=0}^{M-1}\exp\left(\frac{2\pi i k}{M} \right)^s = \begin{cases}
	0, \;\;\; \text{ if $M \nmid k$},\\
	M, \; \text{ if $M \mid k$}
\end{cases}$$
for $0 \leq k \leq N-1$ and that $\frac{M}{N} = \frac{1}{d}$.

Now measurement at this point gives a multiple of $M$ in $\{0,M,...,(d-1)M\}$ with uniform probability. We repeat this whole process many times to obtain a collection of multiples of $M$ and take the GCD to obtain $M$ with high probability.

To estimate how many trials $m \ge 2$ we need, suppose we have $t_1,...,t_m \in \{0,1,...,d-1\}$. We want to estimate the probability that $\gcd(t_1,...,t_m) = 1$, in particular we have the lower bound
\begin{equation}
  \Prob(\gcd(t_1,...,t_m) = 1) \ge \zeta(m)^{-1} + O(\log d/d),
\end{equation}
where $\zeta(s)$ is the Riemann zeta function by \cite{nymann}. Thus a few runs of this algorithm determines $H$ with high probability for any $N$ and `most' $d$.

\begin{lemma}\label{inverse_zeta_bound}
 We have that 
\begin{equation*} 
 \zeta(m)^{-1} > 1 - 3 \cdot 2^{-m}.
\end{equation*}
for every $m \ge 2$.
\end{lemma}
\begin{proof}
We first recall that
\begin{equation*}
\zeta(s) = \sum_{n=1}^{\infty}n^{-s} \text{\;\;\; and \;\;\;} \zeta(s)^{-1} = \sum_{n=1}^{\infty}\frac{\mu(n)}{n^s}.
\end{equation*}
for real $s \ge 2$ where $\mu$ is the M\"{o}bius function. Then
\begin{align*}
  \frac{1 - \zeta(s)^{-1}}{2^{-s}} & = \sum_{n \ge 2} \frac{-\mu(n)}{(n/2)^s} \\
  & \le \sum_{n \ge 2} \frac{1}{(n/2)^2} \\
  & = 4 (\zeta(2) - 1) < 3.
\end{align*}
\end{proof}

We may therefore view the standard algorithm for HSP on the cyclic group $G$ as producing a quantum state of the form
\begin{equation}
    \sum_{t_1, \ldots, t_m} \alpha_{t_1, \ldots, t_m} |t_1 M \rangle \ldots | t_m M \rangle
\end{equation}
We may compute the greatest common divisor of the above registers into a blank register:
\begin{equation}
    \sum_{t_1, \ldots, t_m} \alpha_{t_1, \ldots, t_m} |t_1 M \rangle \ldots | t_m M \rangle | 0 \rangle \mapsto \sum_{t_1, \ldots, t_m} \alpha_{t_1, \ldots, t_m} |t_1 M \rangle \ldots | t_m M \rangle | \gcd(t_1 M, \ldots t_m M) \rangle.
\end{equation}
Thus, the standard HSP algorithm for $G$ can be viewed as a unitary operation of the form:
\begin{equation}
\label{cyclic-HSP}
    |A \rangle | \psi_d^1 \rangle \cdots | \psi_d^m \rangle  | 0 \rangle \mapsto \sum_{e} | \Sigma_{e} \rangle | N/e \rangle \mapsto \sum_{e} | \Sigma_{e} \rangle |e \rangle,
\end{equation}
satisfying 
\begin{equation*}
  | \Sigma_d |^2 \ge \lambda(m,d),
\end{equation*}
for every $m$. We remark the second map sending $e \mapsto N/e$ in the last register is unitary (if $e \nmid N$, the map leaves $e$ alone).

\begin{remark}
\label{cyclic-go}
Assume for any guess for $d$, there is a quantum circuit which can decide if $d$ is correct. For a fixed $m$, we can improve the success probability above by the following method. Let us instead consider the probability of achieving a multiple $k M$ of $M$ for $1 \le k \le C$ for some $C \in \mathbb{N}$. For the given guess of $M$ and hence for $d$, we can check if it is the correct value, and if not, adjust it to the correct value because we know the true value is $d/k$ for some $1 \le k \le C$ and $d/k$ being an integer. This increases the success parameter for a fixed number of samples $m \ge 2$. For instance, if $m = 2$ and $C = 10$, then the success parameter improves from $\approx 0.6079$ to $\approx 0.9892$. 
\end{remark}

The above example motivates the next definition.
\begin{definition}
\label{find-parameter}
Let $\mathcal{I}_d = \left\{ |\psi_d \rangle \right\}$ a collection of possible input states with length $N$ and depending on a parameter $d$. The problem of determining $d$ from 
a list of $m$ samples in $\mathcal{I}_d$ is {\it unitarily solvable with success parameter $\lambda(m,d)$} if there is a unitary operator which has the effect
\begin{equation}
\label{main-unitary}
     |A \rangle | \psi_d^1 \rangle \cdots | \psi_d^m \rangle | 0 \rangle \mapsto |\Sigma_d(\psi_d) \rangle | d \rangle  + \sum_{e \not= d}  |\Sigma_e(\psi_d) \rangle | e \rangle,
\end{equation}
 where
\begin{align*}
   & | \Sigma_d(\psi_d) |^2 \ge \lambda(m,d),
\end{align*}
for every $m$ and $d$. 
\end{definition}

We may view \eqref{main-unitary} as computing a main term 
\begin{equation}
    | \Sigma_d(\psi_d) \rangle | d \rangle
\end{equation}
with error term
\begin{equation}
  \sum_{e \not= d} |\Sigma_e(\psi_d) \rangle | e \rangle.
\end{equation}

%\begin{definition}
%\label{find-clone}
%Let $\mathcal{I}_d = \left\{ |I_d \rangle \right\}$ a collection of possible input states with length $N$ and depending on a parameter $d$. The problem of cloning an additional sample in $\mathcal{I}_d$ from a list of $m$ samples in $\mathcal{I}_d$ is {\it unitarily solvable with success parameter $\lambda(m)$} if there is unitary operator which has the effect
%\begin{equation}
%     |A \rangle | I_d^1 \rangle \cdots | I_d^m \rangle  |I_d \rangle | 0 \rangle \mapsto \sum_e |\Sigma_e \rangle | I_e \rangle |I_e \rangle,
%\end{equation}
% where
%\begin{align*}
%   & | \Sigma_d |^2 > \lambda(m) - \epsilon,
%\end{align*}
%for every $m$ and all but a constant number (depending on $\epsilon > 0$ but not $N$) of the $d$. 
%\end{definition}

The next theorem is stated for completeness and for later comparison to the case of DCP. It summarizes the well-known standard algorithm for HSP on a finite cyclic group in terms of the definitions above.
\begin{theorem}
\label{finite-cyclic-algorithm}
  The problem of determining a generator for a hidden subgroup of a finite cyclic group, given a list of $m$ HSP coset samples, is unitarily solvable with success parameter $\zeta(m)^{-1} + O(\log d/d)$ where $\zeta(s)$ is the Riemann zeta function. 
\end{theorem}

\begin{remark}
  Here $M = N/d$, so we may view the standard quantum algorithm as producing uniform samples in $H^\perp = \langle d \rangle^\perp = \langle N/d \rangle$. For a general abelian group $G$, the uniform samples in $H^\perp$ from the standard quantum algorithm are used to determine $H$ using a classical probabilistic algorithm.
\end{remark}

\subsection{The dihedral group case}

In \cite{ettinger-hoyer}, it is shown that the hidden subgroup problem for $G = D_N$ for a general subgroup $H$ is reduced to the case of a single reflection subgroup $H = H_a$. 

For $H = H_a =  \langle y x^a \rangle$, the probability of obtaining $| \rho, i, j \rangle$ is $\frac{1}{|G|}$ when $d_\rho = 2$, which does not allow one to distinguish the groups $H_a$. Explicitly, in the complex basis \eqref{complex-basis}:

If $\rho = \rho_k$,  then
\begin{align}
   \label{dcp-case}
   \sum_{h \in H}  \rho(x^\alpha h) & = \begin{pmatrix}
   \omega_N^{\alpha k} & \omega_N^{-(a-\alpha)k} \\
   \omega_N^{(a-\alpha)k} & \omega_N^{-\alpha k}
  \end{pmatrix} \\
    \sum_{h \in H}  \rho(y x^\alpha h) & = \begin{pmatrix}
   \omega_N^{(a-\alpha)k} & \omega_N^{-\alpha k} \\
   \omega_N^{\alpha k} & \omega_N^{-(a-\alpha)k} \\
  \end{pmatrix}.
\end{align}

If $\rho = \phi_{u,v}$, then
\begin{align}
  \sum_{h \in H} \rho(x^\alpha h) & = (-1)^{\alpha u} + (-1)^{v + (a-\alpha)u} \\
  \notag & = (-1)^{\alpha u} (-1 + (-1)^{v + au}) \\
   \sum_{h \in H} \rho(y x^\alpha h) & = (-1)^{(a-\alpha) u} + (-1)^{v + \alpha u} \\
  \notag & = (-1)^{(a-\alpha) u} (-1 + (-1)^{v + au}).
\end{align}

If one changes to the real basis, we get a probability distribution dependent on $a$, but it is very flat, making it hard to distinguish the subgroups $H_a$.

More generally, in order for the QFT to be an unitary operator, we require that $\rho_k(g)$ be unitary for every $k$ and $g \in D_N \implies |\rho_k(g)_{i,j}| \le 1$ for $1 \le i, j \le 2$. In particular,
for any set of $2$-dimensional irreducible representations $\rho_k$, we have that
\begin{align}
\label{fourier-bound}
  \Prob(\rho_k,i,j) &= \frac{1}{n} | (\rho_k(c y x^a) + \rho_k(c))_{i,j} |^2 \\
  \notag & \le \frac{1}{N} ( | \rho_k(c y x^a)_{i,j}| + | \rho_k(c)_{i,j}| )^2 \\
  \notag & \le \frac{4}{N},
\end{align}
where $\Prob(\rho_k,i,j)$ is the probability of observing the state $| \rho_k, i,j \rangle$.  Although the choice of basis may result in probability distributions of states which depend on $a$, if $N$ is very large, the above inequalities show that the probabilities will always be very flat.

In \cite{moore-russell}, it is shown that the optimal positive operator valued measurement (POVM) to determine $a$ from a single DCP sample exists and is given by the the pretty good measurement (PGM). Also, the optimal measurement has success probability
\begin{equation}
\label{pgm-bound}
    \Prob_\text{success} = \frac{2}{N} \left( 1 - \frac{1}{2N} \right).
\end{equation}

\begin{theorem}
  The standard algorithm for DHSP cannot implement the optimal measurement using one coset sample. 
\end{theorem}
\begin{proof}
  This follows because \eqref{fourier-bound} and \eqref{pgm-bound} are incompatible.
\end{proof}

\section{Dihedral coset sampling}

In the standard HSP algorithm, after the first step we are left with random coset samples as in \eqref{coset-samples}. In the case of $G = D_N$, the dihedral group of order $n$, and $H = H_a = \langle y x^a \rangle$, this is explicitly of the form
\begin{align}
      \frac{1}{|H|} \sum_{h \in H} | ch \rangle & = \frac{1}{\sqrt{2}} \left( | c \rangle + | c y x^a \rangle \right) \\
      \notag & = \frac{1}{\sqrt{2}} \left( | y^\beta x^\alpha \rangle + | y^\beta x^\alpha y x^a \rangle \right) \\
      \notag & = \begin{cases}
      		\frac{1}{\sqrt{2}} \left( | x^\alpha \rangle + | y x^{a-\alpha} \rangle \right) & \text{ if } \beta = 0\\
      		\frac{1}{\sqrt{2}} \left( | x^{a-\alpha} \rangle + | y x^{\alpha} \rangle \right) & \text{ if } \beta = 1\\
	\end{cases}
\end{align}
where $c = y^\beta x^\alpha$. 

\begin{remark}
\label{hsp-dcp}
The second case is reduced to the first by the transformation $\alpha \rightarrow a - \alpha$ if this transformation leaves the distribution of $\alpha$ invariant.
\end{remark}

Given samples of the form
\begin{equation}
\label{unified-coset-sample}
  \psi_a = \psi_{a; \alpha} = \frac{1}{\sqrt{2}} \left( | x^\alpha \rangle + | y x^{a-\alpha} \rangle \right),
\end{equation}
the dihedral coset problem (DCP) is the problem of finding generators for the hidden subgroup $H = H_a$. The states $\psi_a = \psi_{a; \alpha}$ are called DCP samples for $a$.

For HSP samples produced from the standard algorithm, where $\alpha$ is from the uniform distribution, we may view HSP samples as DCP samples by Remark~\ref{hsp-dcp}.

\begin{remark}
We can encode a DCP sample $\psi_{a;\alpha}$ as 
\begin{equation*}
  \frac{1}{\sqrt{2}} ( | 0 \rangle | \alpha  \rangle + |1 \rangle | a - \alpha \rangle).
\end{equation*}
Using the fact that $y x^\alpha = x^{-\alpha} y$, this can be encoded (after negating $a$) as
\begin{equation*}
  \frac{1}{\sqrt{2}} ( | 0 \rangle | \alpha  \rangle + |1 \rangle | a + \alpha \rangle),
\end{equation*}
which is another commonly used form used in the literature, especially in the context of the `hidden shift problem'.
\end{remark}

%Let us determine a linearly independent subset consisting of DCP samples.
%\begin{lemma}
%The set of DCP samples 
%\begin{align}
%    \left\{ \psi_{a;0}, \psi_{a;a}, \psi_{0;0}, \psi_{0;1} : a \in \left\{ 1, \ldots, n-1 \right\} \right\}
%\end{align}
%are linearly independent.
%\end{lemma}

\section{Other approaches to DHSP and DCP}

\subsection{Subexponential algorithms} The first row of \eqref{dcp-case} can be encoded as
\begin{align*}
    & \frac{1}{\sqrt{2N}} \sum_{k} \left( \omega_N^{\alpha k}  | k \rangle |0 \rangle + \omega_N^{(a-\alpha)k}  | k \rangle |1 \rangle \right) \\
    & = \frac{1}{\sqrt{N}} \sum_{k} \omega_N^{\alpha k} \otimes \frac{1}{\sqrt{2}} \left( |0 \rangle + \omega_N^{a k} | 1 \rangle \right).
\end{align*}
Measuring the first register yields samples of the form
\begin{align}
\label{tensor-sample}
  | \Psi_k \rangle = \frac{1}{\sqrt{2}} \left( |0 \rangle + \omega_N^{a k} | 1 \rangle \right),
\end{align}
where $k$ is known from the measurement. 

Let $N = 2^t$ for simplicity and $m = \lceil \sqrt{t-1} \rceil$. The idea behind the subexponential algorithm in \cite{kuperberg} is to combine states of the form \eqref{tensor-sample}. In particular, we see that
\begin{align}
\label{tensor-combine}
    | \Psi_p \rangle | \Psi_q \rangle = \frac{1}{\sqrt{2}} \left( | \Psi_{p+q} \rangle |0 \rangle + \omega_N^{aq} | \Psi_{p-q} \rangle | 1 \rangle \right).
\end{align}
If $p$ and $q$ have the same $mj$ least significant bits, then $p \pm q$ strictly increases the number of least significant bits $p$ and $q$ share.

With sufficiently many samples of the form $\Psi_p$ which have $mj$ common least significant bits, it is shown in \cite{kuperberg} that combining the states as in \eqref{tensor-combine} produces enough states with $m(j+1)$ common least significant bits. Thus, sieving from enough samples at the outset, we eventually produce states of the form
\begin{equation*}
   \Psi_{2^{t-1}} = | 0 \rangle + (-1)^a | 1 \rangle
\end{equation*}
which are sufficient to determine the parity of $a$. It is shown in \cite{kuperberg} that the above method yields an algorithm which requires $2^{O\left(\sqrt{\log N}\right)}$ time, space, and queries. In \cite{regev}, a modified algorithm is given which requires $2^{O(\sqrt{\log N \log \log N})}$ time and $\text{poly}(\log N)$ space. Further improvements and generalizations can be found in \cite{kuperberg-2}.

\begin{remark}
In \cite{kuperberg}, it is shown that HSP for $D_{2^t}$ reduces to determining the parity of $a$. 
\end{remark}

\subsection{Query complexity}

In \cite{ettinger-hoyer}, it is shown that a polynomial number of HSP samples is sufficient to recover $H_a$ using exponential time post-processing. A related result in \cite{ettinger-hoyer-2} using different methods shows the HSP problem in a general finite group has polynomial quantum query complexity.

Transposing $i \leftrightarrow j$, and applying a Hadamard gate to the state in \eqref{dcp-case}, gives the state
\begin{equation}
\frac{1}{\sqrt{2}}
\begin{pmatrix}
1 & 1 \\
1 & -1
\end{pmatrix}
\begin{pmatrix}
   \omega_N^{\alpha k} & \omega_N^{(a-\alpha)k}  \\
   \omega_N^{-(a-\alpha)k} & \omega_N^{-\alpha k}
  \end{pmatrix} = \frac{1}{\sqrt{2}} \begin{pmatrix}
  \omega_N^{\alpha k} (1 + \omega_N^{-ak}) & \omega_N^{-\alpha k} (1 + \omega_N^{ak} ) \\
  \omega_N^{\alpha k} (1 - \omega_N^{-ak}) & \omega_N^{-\alpha k} (1 - \omega_N^{ak} ) \\  
  \end{pmatrix}.
\end{equation}
The probability of observing the first row is 
\begin{equation}
  \frac{1}{2n} (1+\cos(2 \pi ak/N)) = \frac{1}{n} \cos^2(\pi ak/N).
\end{equation}
For the second row, it is
\begin{equation}
  \frac{1}{2n} (1-\cos(2 \pi ak/N)) = \frac{1}{n} \sin^2(\pi ak/N).
\end{equation}
We are now in the situation of \cite{ettinger-hoyer} and can apply the post-processing algorithm described (which is exponential in time) to determine $a$ with high probability, for large $N$.

\subsection{Relation to the subset sum problem} Given $x = (x_1, \ldots, x_m) \in (\Z/N\Z)^m$ and $r \in \Z/N\Z$, the problem of finding $b \in \left\{ 0, 1 \right\}^m$ such that $b \cdot x = r$ is called the subset sum problem over $\Z/N\Z$.

The vector $b$ corresponds to specifying a subset of the $x_1, \ldots, x_m$ that sum to $r$.  Denote by 
\begin{equation*}
  S_r^x = \left\{ b \in \left\{ 0, 1 \right\}^m : b \cdot x = r \right\}
\end{equation*}
the set of subset sums for $(x,r)$. 

If such a $b$ exists, then $(x,r)$ is called a legal instance. In the decision version of the subset sum problem, the problem is to determine whether a given $(x,r)$ is a legal instance. 

In \cite{regev}, it is shown that the ability to efficiently find an element $b \in S_r^x$ for a large fraction of legal instances gives an efficient algorithm to solve DHSP. Furthermore, \cite{childs} shows that the ability to quantum sample from $S_r^x$ allows one to efficiently implement an optimal measurement to determine $a$ from $m$ DCP samples.

The subset sum problem over $\Z$ is known to be an NP-complete problem. Since one can reduce the subset sum problem over $\Z$ to the subset sum problem over $\Z/N\Z$, by choosing a large enough modulus $N$, it follows that the subset sum problem over $\Z/N\Z$ is also NP-complete.

\subsection{Optimal measurements}

It is shown in \cite{ettinger-hoyer-3} that efficient elimination observables do not exist for the dihedral group. Further results can be found in \cite{childs}. In particular, let 
\begin{equation*}
    \nu = m/\log_2 N
\end{equation*}
be the density defined in \cite{childs}.

It is shown in \cite[Theorem 2]{childs} that if $\nu > 1 + 4/\log_2 N$, the probability of determining $a$ using the optimal measurement on $m$ DCP samples is $\ge 1/8$. Furthermore, for any $N$ and $m$, the probability of determining $a$ is 
\begin{equation}
\label{nu-bound}
  \le 2^m/N = 2^{(\nu-1) \log_2 N},
\end{equation}  
which is exponentially small in $\log_2 N$ for any fixed $\nu < 1$, and gives a trivial upper bound when $\nu \ge 1$.

More general results on optimal measurements to distinguish conjugate hidden subgroups in certain groups can be found in \cite{moore-russell}.

In \cite{childs}, the success probability of the optimal measurement is determined to
\begin{equation*}
    p_{m,N} = \frac{1}{2^m N^{m+1}} \sum_{x \in (\Z/N\Z)^m} \left( \sum_{r \in \Z/N\Z} \sqrt{\eta_r^x} \right)^2,
\end{equation*}
where $\eta_r^x := |S_r^x|$.

\begin{remark}
\label{dcp-actual-bound}
 For example, let $m = 2$, $N = 2^m$, and $\nu = 1$. Computer calculations show that $p_{m,N} \approx 0.6665$. On the other hand, we saw in Remark~\ref{cyclic-go} that we can achieve a success probability of $\approx 0.9892$ for $m = 2$ in the cyclic group case.
\end{remark}

In \cite{moore-russell}, it is shown that the optimal POVM measurement to determine $a$ from $m$ DCP samples exists and is given by the PGM. The theorem of Naimark states that a POVM measurement on a system can be realized by augmenting the system with ancilla registers, applying a unitary operator, and then a PVM measurement on the ancilla. Seen in this light, the result in \cite{ettinger-hoyer} implies that the success probability of the optimal measurement is $> 1 - \frac{1}{2N}$ if $\nu > 89$, though no efficient implementation is known.

\begin{remark}
  In the classical world, if we have a probabilistic algorithm that succeeds with probability $> \frac{1}{2}$, we can run the algorithm multiple times on the same input to make the success probability arbitrarily close to $1$. In the quantum world, we cannot in general reuse inputs which are quantum states, so running the quantum algorithm multiple times requires more quantum samples, unless one can clone the input samples. However, we will see in the last section that for some problems such as DCP, cloning the input samples is essentially equivalent to solving the original problem.
\end{remark}

\section{A probabilistic no-go result for DCP}

First, a unitary no-go result for DCP.
\begin{theorem}
\label{no-dcp}
  There is no unitary operation to compute the value of $a$ into a register from a list of DCP samples for $a$.
\end{theorem}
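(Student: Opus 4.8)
The plan is to deduce Theorem~\ref{no-dcp} from Theorem~\ref{no-dcp-clone}: a unitary that extracts $a$ can be bootstrapped, via Proposition~\ref{copy-orthogonal2}, into a unitary that clones a DCP sample while leaving all of the input samples untouched, which Theorem~\ref{no-dcp-clone} forbids.

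First I would put the hypothetical $a$-extracting unitary into a convenient normal form. Suppose $V$ is a unitary that, applied to a list $|\psi^1_a\rangle\cdots|\psi^N_a\rangle$ of DCP samples together with a fixed ancilla, leaves $a$ in a designated output register. A register that yields $a$ with certainty upon measurement must in fact be in the product state $|a\rangle$ (a computational basis state cannot be entangled with the rest while still giving a deterministic outcome), so the standard ``copy the answer and uncompute'' trick --- run $V$, CNOT the output register into a fresh register, run $V^{-1}$ --- produces a unitary $V'$ with
\[
 V'\,|\psi^1_a\rangle\cdots|\psi^N_a\rangle\,|s\rangle\,|M\rangle \;=\; |\psi^1_a\rangle\cdots|\psi^N_a\rangle\,|a\rangle\,|M\rangle ,
\]
i.e.\ $V'$ writes $a$ into a fresh register and returns the samples and ancilla unchanged, for every $a$ and every choice of the hidden shifts.

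Next I would invoke Proposition~\ref{copy-orthogonal2}. For fixed $a$ the DCP samples $\psi_a^\alpha=\tfrac1{\sqrt2}(|x^\alpha\rangle+|yx^{a-\alpha}\rangle)$ are mutually orthogonal as $\alpha$ ranges over $\Z/n\Z$ (the four cross terms in $\langle\psi_a^\alpha|\psi_a^{\alpha'}\rangle$ collapse to $\delta_{\alpha,\alpha'}$), and there is a unitary $T$ with $T|a\rangle|\psi_a^\alpha\rangle=|a\rangle|\alpha\rangle$: controlled on the value $a$, the assignment $\alpha\mapsto\{x^\alpha,\,yx^{a-\alpha}\}$ is a perfect matching of the $2n$ group elements into $n$ pairs, so one may take $T_a$ to act as the $2\times2$ unitary sending $\tfrac1{\sqrt2}(|x^\alpha\rangle\pm|yx^{a-\alpha}\rangle)$ to $|x^\alpha\rangle,\,|yx^{a-\alpha}\rangle$ on each pair. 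Now take an input list $|\psi^1_a\rangle\cdots|\psi^N_a\rangle|\psi_a\rangle$ with an arbitrary further DCP sample $\psi_a=\psi_a^\alpha$ appended, run $V'$ to obtain $a$ in a fresh register, then apply Proposition~\ref{copy-orthogonal2} to $\psi_a$ using that register. The result is
\[
 |\psi^1_a\rangle\cdots|\psi^N_a\rangle\,|\psi_a\rangle\,|\psi_a\rangle\,|M_a\rangle ,
\]
where the leftover registers --- now holding two copies of $|a\rangle$ together with the fixed blanks --- are collected into the ancilla $|M_a\rangle$, which Theorem~\ref{no-dcp-clone} allows to depend on $a$; crucially all $N+1$ input samples are returned untouched, since $V'$ leaves them alone and Proposition~\ref{copy-orthogonal2} returns $|\psi^i_a\rangle$ in its register. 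This composite unitary is precisely the operation ruled out by Theorem~\ref{no-dcp-clone} (with $N$ replaced by $N+1$), a contradiction.

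The main point to get right is the register bookkeeping in the last step --- that the composite genuinely leaves every one of the $N+1$ DCP samples alone and that the surplus $|a\rangle$ registers are legitimately absorbed into the $a$-dependent ancilla permitted by Theorem~\ref{no-dcp-clone} --- together with the routine but necessary check that the controlled unitary $T$ demanded by Proposition~\ref{copy-orthogonal2} exists for DCP samples. The reduction of $V$ to its normal form is the other step that deserves an explicit (if standard) word.
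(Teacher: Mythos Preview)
Your proposal is correct and follows essentially the same route as the paper: normalize the hypothetical extractor via copy-and-uncompute so that it writes $|a\rangle$ while leaving the samples intact, then build the controlled unitary $T$ required by Proposition~\ref{copy-orthogonal2} (the paper does this explicitly via a controlled subtraction $|a\rangle|1\rangle|\alpha\rangle\mapsto|a\rangle|1\rangle|a-\alpha\rangle$ followed by a Hadamard, which is exactly your block $2\times2$ map on each pair $\{x^\alpha,yx^{a-\alpha}\}$), and conclude by cloning an extra sample in contradiction to Theorem~\ref{no-dcp-clone}. The only slip is the parenthetical ``with $N$ replaced by $N+1$'': your composite matches the statement of Theorem~\ref{no-dcp-clone} with the same $N$, since that theorem already separates the list $\psi^1_a,\dots,\psi^N_a$ from the additional sample $\psi_a$ being cloned.
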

\begin{proof}
Suppose there is a unitary operator $U$ which has the effect
\begin{equation}
\label{a-compute}
    U | A \rangle |\psi^1_a \rangle \cdots | \psi^m_a \rangle |0 \rangle   = | \Sigma_a(\psi_a) \rangle | a \rangle
\end{equation}
for every $a$. That is, $U$ takes takes a list of DCP samples for fixed but unknown $a$, a blank initialization state $| 0 \rangle$, and an ancilla state $| A \rangle$, and then computes $a$ into the blank register.

For any other $b \not= a$, we must also have
\begin{equation}
\label{b-compute}
    U | A \rangle |\psi^1_b \rangle \cdots | \psi^m_b \rangle |0 \rangle   = | \Sigma_b(\psi_b) \rangle | b \rangle.
\end{equation}

There are choices of $\psi^i_c$ for $i = 1, \ldots, m$ such that
\begin{equation}
\label{dcp-frame}
   \langle \psi^i_a | \psi^i_b \rangle = \frac{1}{2}
\end{equation}
for all $a \not= b$ and $i = 1, \ldots m$. To see this, recall the states
\begin{align*}
    \psi_a = \frac{1}{\sqrt{2}} \left( | x^\alpha \rangle + | y x^{a-\alpha} \rangle \right), \\
    \psi_b = \frac{1}{\sqrt{2}} \left( | x^\beta \rangle + | y x^{b-\beta} \rangle \right), \\
\end{align*}
have possible inner product $\langle \psi_a | \psi_b \rangle \in \left\{ 0, \frac{1}{2}, 1 \right\}$, and there are choices of $\psi_a$ and $\psi_b$ such that
\begin{equation}
\label{nearly-orthogonal}
  \langle \psi_a | \psi_b \rangle \not= 0, 1,
\end{equation}  
for instance, if  $a \not= b$ and $a - \alpha = b - \beta$ or $\alpha = \beta$. In particular, taking
\begin{equation*}
  \psi^i_c = | x^c \rangle + | y x^0 \rangle
\end{equation*}
for $c \in \Z/N\Z$ satisfies \eqref{dcp-frame}.

Taking the inner product of \eqref{a-compute} and \eqref{b-compute}, we obtain
\begin{equation}
\label{normal-compute}
  \langle \psi^1_a | \psi^1_b \rangle \cdots \langle \psi^m_a | \psi^m_b \rangle  = \langle \Sigma_a(\psi_a) | \Sigma_b(\psi_b) \rangle \langle a | b \rangle = 0,
\end{equation}
a contradiction as we have shown there are choices of $\psi_a^i$ and $\psi_b^i$ making the left hand side of \eqref{normal-compute} non-zero.

\end{proof}
We will give yet another proof of Theorem~\ref{no-dcp} in Theorem~\ref{no-dcp-2}. The proof of Theorem \ref{no-dcp} mirrors the proof of the no cloning theorem \cite{wootters-zurek} and precludes unitary operations, but not more general quantum algorithms,  which may allow for approximate outputs, probabilistic processes, or post-processing. Indeed, computing the exact value of $a$ into a register is rather strong: even in the finite cyclic group case, the standard algorithm only determines a generator for the hidden subgroup using a process of the type given in Theorem~\ref{finite-cyclic-algorithm}.

The following is a probabilistic no-go result for DCP based on modifying the proof of the unitary no-go result for DCP. 
\begin{theorem}
\label{no-dcp-probabilistic-1}
  The problem of determining $a$, given a list of $m$ DCP samples for unknown $a$, is not unitarily solvable with a success parameter independent of $a$ that is $\ge 1 - \frac{1}{9} \cdot 2^{- 2m}$.
\end{theorem}
\begin{proof}
To ease notation, we let
\begin{align}
    \psi_a & = |\psi^1_a \rangle \cdots | \psi^m_a \rangle, \\
    \psi_b & = |\psi^1_b \rangle \cdots | \psi^m_b \rangle.
\end{align}

Suppose there is a unitary operator $U$ which has the effect
\begin{align}
\label{dcp-a}    U  | A \rangle |\psi^1_a \rangle \cdots | \psi^m_a \rangle |0 \rangle  & = | \Sigma_a(\psi_a) \rangle | a \rangle + \sum_{c \not= a} | \Sigma_c(\psi_a) \rangle | c \rangle, \\
\label{dcp-b}    U  | A \rangle |\psi^1_b \rangle \cdots | \psi^m_b \rangle |0 \rangle  & = | \Sigma_b(\psi_b) \rangle | b \rangle + \sum_{c \not= b} | \Sigma_c(\psi_b) \rangle | c \rangle, 
\end{align}
where 
\begin{align}
\label{dominant}
  |\Sigma_a(\psi_a)|^2 & \ge 1 - 2^{-\delta}, \\
  \notag |\Sigma_b(\psi_b)|^2 & \ge 1 - 2^{-\delta},
\end{align}
and $\delta$ is to be chosen.

Because of \eqref{dominant}, we have that
\begin{align}
    \sum_{c \not= a}|\Sigma_c(\psi_a)|^2 & < 2^{-\delta},  \\
    \notag \sum_{c \not= b} |\Sigma_c(\psi_b)|^2 & < 2^{-\delta}.
\end{align}
Taking the inner product of \eqref{dcp-a} and \eqref{dcp-b}, we obtain
\begin{align}
\label{main-ineq} \langle \psi^1_a | \psi^1_b \rangle \cdots \langle \psi^m_a | \psi^m_b \rangle 
  & \le | \langle \Sigma_a(\psi_a) | \Sigma_a(\psi_b) \rangle | + | \langle \Sigma_b(\psi_a) | \Sigma_b(\psi_b) \rangle | + \sum_{c \not= a,b } | \langle \Sigma_c(\psi_a)| \Sigma_c(\psi_b) \rangle| \\
\notag  & \le | \langle \Sigma_a(\psi_a) | \Sigma_a(\psi_b) \rangle | + | \langle \Sigma_b(\psi_a) | \Sigma_b(\psi_b) \rangle | + 2^{-\delta} \\
\notag  & \le 2^{-\delta} + 2\cdot 2^{-\delta/2} \\
\notag  & < 3 \cdot 2^{-\delta},
\end{align}
using Cauchy-Schwartz repeatedly. Arrange the left most side of \eqref{main-ineq} to be $2^{-m}$ as in \eqref{nearly-orthogonal} and we see that choosing $\delta \ge 2 \left( m + \log_2 3 \right)$ gives a contradiction to the above inequality.
\end{proof}

\begin{remark}
At fixed $\nu = m/\log_2 N$, Theorem~\ref{no-dcp-probabilistic-1} gives an upper bound on the success parameter of 
\begin{equation}
\label{nu-bound-2}
    1 - 2^{-2 (\nu \log_2 N + \log_2 3)} = 1 - \frac{1}{9} N^{-2\nu}.
\end{equation}
Although the bound in \eqref{nu-bound-2} seems far from optimal (see Remark~\ref{dcp-actual-bound}), it is still stronger than trivial bounds which result from \eqref{nu-bound} \cite[Theorem 2]{childs} or \cite{moore-russell} when $\nu \ge 1$.
\end{remark}

\section{Quantum cloning and DCP}

In this section, we explain a connection between DCP and quantum cloning. Although the topics in this section are not needed for the results of the previous section, the connection with quantum cloning helped motivate the proofs of the previous section, so we have included it for completeness.

By copying a state $| \psi \rangle$, we mean forming the composite state $| A \rangle | \psi \rangle | 0 \rangle  $ for a blank initialization state $| 0 \rangle$ and ancilla state $| A \rangle$, and applying a quantum algorithm to produce the state $| \Sigma(\psi) \rangle | \psi \rangle | \psi \rangle $. 

The no cloning theorem asserts that there is no unitary operation which can copy a general unknown quantum state. However, if the states are chosen from a known set of mutually orthogonal states, it is well known that cloning is possible, as shown for completeness in the following proposition.
\begin{proposition}
\label{copy-orthogonal}
Let $| \psi_{a;1} \rangle, \ldots, | \psi_{a;m} \rangle$ be a set of mutually orthogonal states which depend on a parameter $a$. Suppose $| \psi \rangle = | \psi_{a,i} \rangle$ for some index $i$ (which is unknown).  

If the value of $a$ is known, then there is a unitary operation which copies $| \psi \rangle$.
\end{proposition}
\begin{proof}
First note that we can copy any state $|i \rangle$ of the computational basis. Start with
\begin{equation*}
   | i \rangle | 0 \rangle =   | i_n \rangle \ldots | i_0 \rangle | 0 \rangle \ldots |0 \rangle,
\end{equation*}
where we have encoded the last two registers into $n$ qubits, for $n$ large enough.

Applying a CNOT gate to the $j$th and $(j+n+1)$th qubits $| i_j \rangle | 0 \rangle$  produces $| i_j \rangle | i_j \rangle$ for every $j$. Hence, we can produce the state
\begin{equation*}
  | i_n \rangle \ldots | i_0 \rangle | i_n \rangle \ldots | i_0 \rangle = |i \rangle | i \rangle.
\end{equation*}

Now, encode a unitary operator $U_a$ which has the effect
\begin{equation*}
   U_a | \psi_{a;i} \rangle = | i \rangle .
\end{equation*}
Starting with 
\begin{equation*}
     |\psi_{a;i} \rangle | 0 \rangle ,
\end{equation*}
apply $U_a$ to the first register to obtain
\begin{equation*}
    | i \rangle | 0 \rangle.
\end{equation*}
Copy the state $| i \rangle$ to obtain
\begin{equation*}
  | i \rangle | i \rangle.
\end{equation*}
Applying $U_a^{-1}$ to both registers gives
\begin{equation*}
   | \psi_{a;i} \rangle | \psi_{a;i} \rangle.
\end{equation*}
\end{proof}

Later we will need a slightly stronger version of Proposition~\ref{copy-orthogonal}.
\begin{proposition}
\label{copy-orthogonal2}
Let $| \psi_{a;1} \rangle, \ldots, | \psi_{a;m} \rangle$ be a set of mutually orthogonal states which depend on a parameter $a$ and assume we can encode a unitary operator $T$ such that $T | a \rangle |\psi_{a;i} \rangle = |a \rangle | i \rangle$. 

Suppose $| \psi \rangle = | \psi_{a; i} \rangle$ for some index $i$ (which is unknown). If we have the value of $a$ in a register, then there is a unitary operation which copies $| \psi \rangle$.
\end{proposition}
\begin{proof}
Starting with 
\begin{equation*}
     | a \rangle  |\psi_{a,i} \rangle | 0 \rangle | 0 \rangle ,
\end{equation*}
apply $T$ to obtain
\begin{equation*}
    |a \rangle | i \rangle | 0 \rangle | 0 \rangle.
\end{equation*}
Copy the states  $|a \rangle$ and $| i \rangle$ to obtain
\begin{equation*}
        |a \rangle | i \rangle | a \rangle | i \rangle.
\end{equation*}
Applying $T^{-1}$ to both pairs of registers gives
\begin{equation*}
   |a \rangle | \psi_{a,i} \rangle |a \rangle | \psi_{a,i} \rangle.
\end{equation*}
which we can permute to obtain
\begin{equation*}
   |a \rangle | \psi_{a,i} \rangle | \psi_{a,i} \rangle | a \rangle.
\end{equation*}
\end{proof}

\begin{proposition}
\label{dcp-cloning}
If we can copy any given DCP sample
\begin{equation}
\label{dcp-sample}
  \psi_{a; \alpha} = \frac{1}{\sqrt{2}} \left( | x^\alpha \rangle + | y x^{a-\alpha} \rangle \right),
\end{equation}
to produce a state of the form
\begin{equation}
\label{dcp-equal}
  \psi_{a; \alpha} \otimes \psi_{a; \alpha} = \frac{1}{\sqrt{2}} \left( | x^\alpha \rangle  + | y x^{a-\alpha} \rangle  \right) \otimes \frac{1}{\sqrt{2}} \left( | x^\alpha \rangle  + | y x^{a-\alpha} \rangle  \right),
\end{equation}
then we can determine the value of $a$ from DCP samples for $a$.

If $a$ is known, then we can copy any given DCP sample for $a$ using a unitary operation.
\end{proposition}
\begin{proof}
Given samples of the form \eqref{dcp-equal}, we measure both registers, and with probability $1/2$ we obtain 
\begin{equation}
  | x^\alpha \rangle | y x^{a-\alpha} \rangle \text{ or } |y x^{a-\alpha} \rangle | x^\alpha \rangle.
\end{equation}
The sum of the observed exponents of the two registers gives $a$.

If $a$ is known, then DCP samples for $a$,
\begin{equation*}
  \psi_{a; \alpha} = \frac{1}{\sqrt{2}} \left( |x^\alpha \rangle  + | y x^{a-\alpha} \rangle \right),
\end{equation*}
are chosen from a set of mutually orthogonal states depending on the parameter $a$. By Proposition~\ref{copy-orthogonal}, for each sample of the form \eqref{dcp-sample} we can copy it to produce a sample of the form \eqref{dcp-equal}.
\end{proof}

\begin{remark}
Copying a DCP sample up to parity would allow one to determine the parity of $a$ and vice versa.
\end{remark}

\begin{theorem}
\label{no-dcp-clone}
  If $a$ is unknown, there is no unitary operation, which from a list of DCP samples for $a$, copies an additional DCP sample for the same $a$, while leaving the list of DCP samples alone.
\end{theorem}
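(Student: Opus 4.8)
The plan is to run the standard no-cloning argument---comparing inner products before and after applying the putative unitary---while carefully exploiting that DCP samples for two different hidden parameters $a$ and $a'$ are non-orthogonal but are genuinely different states. Assume toward a contradiction that a unitary $U$, acting on $m$ ``list'' registers together with a blank register (and ancilla), accomplishes the claimed cloning, so that for every residue $a\bmod n$ and every tuple of exponents $\alpha_1,\dots,\alpha_m$ one has
\[
   U\left(\psi_a^{\alpha_1}\otimes\cdots\otimes\psi_a^{\alpha_m}\otimes|0\rangle\right)=\psi_a^{\alpha_1}\otimes\cdots\otimes\psi_a^{\alpha_m}\otimes\chi_a,
\]
where, by ``leaving the list alone,'' the first $m$ factors are returned untouched, and the designated sub-register of $\chi_a$ carries a DCP sample $\psi_a^{\beta}$ for the same $a$ (the exponent $\beta=\beta(a,\alpha_\bullet)$ and the remaining ancilla content may depend on $a$ and on the $\alpha_i$). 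Since a bipartite pure state with a pure marginal is a product, I may write $\chi_a=\psi_a^{\beta_a}\otimes M_a$ for some unit ancilla vector $M_a$.

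The computational heart is a one-line overlap calculation. Fixing two distinct residues $a\ne a'\pmod n$ and keeping the same exponents $\alpha_i$ in both runs, orthonormality of the group basis $\{|g\rangle:g\in D_n\}$ gives $\langle\psi_a^{\alpha_i}\mid\psi_{a'}^{\alpha_i}\rangle=\tfrac12$ for each $i$, since only the term $\langle x^{\alpha_i}\mid x^{\alpha_i}\rangle=1$ survives and $\langle yx^{a-\alpha_i}\mid yx^{a'-\alpha_i}\rangle=0$. Hence the two input states have overlap $2^{-m}\ne0$, so by unitarity the two outputs have overlap $2^{-m}$ as well. On the output side the first $m$ tensor factors contribute exactly the same $2^{-m}$, so I am forced to have $\langle\psi_a^{\beta_a}\mid\psi_{a'}^{\beta_{a'}}\rangle\cdot\langle M_a\mid M_{a'}\rangle=1$; as both factors have modulus at most $1$, this gives $|\langle\psi_a^{\beta_a}\mid\psi_{a'}^{\beta_{a'}}\rangle|=1$, i.e.\ $\psi_a^{\beta_a}=e^{i\theta}\psi_{a'}^{\beta_{a'}}$.

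To finish, I would observe that a DCP sample has exactly one computational-basis term of rotation type $|x^{\cdot}\rangle$ and exactly one of reflection type $|yx^{\cdot}\rangle$, each with coefficient $1/\sqrt2$; matching the rotation terms of $\psi_a^{\beta_a}$ and $\psi_{a'}^{\beta_{a'}}$ forces $e^{i\theta}=1$ and $\beta_a\equiv\beta_{a'}$, and then matching the reflection terms forces $a-\beta_a\equiv a'-\beta_{a'}$, i.e.\ $a\equiv a'\pmod n$, contradicting the choice of $a\ne a'$ (the statement is vacuous when $n=1$). I expect the only real obstacle to be the formalization rather than the algebra: one must pin down exactly what ``copies a DCP sample while leaving the list alone'' permits---in particular ruling out the escape in which the new copy is left entangled with the ancilla (this is exactly what the pure-marginal remark handles) and the escape in which the cloner is allowed to disturb the input list (there one argues instead that the output would carry $m+1$ mutually unentangled DCP samples for $a$, that the cross-$a$ overlap of any two DCP samples is at most $1/2$, and hence that the output overlap is at most $2^{-(m+1)}<2^{-m}$, again contradicting unitarity).
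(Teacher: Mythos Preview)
Your argument is correct and uses the same core technique as the paper (preservation of inner products under a unitary, i.e.\ the no-cloning argument), but your formalization of the statement and your endgame both differ from the paper's in ways worth noting.

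\textbf{Formalization.} The paper reads ``copies an additional DCP sample'' literally: besides the list $|\psi_a^1\rangle,\dots,|\psi_a^N\rangle$ there is a \emph{designated input sample} $|\psi_a\rangle$, and the putative unitary sends
\[
|\psi_a^1\rangle\cdots|\psi_a^N\rangle\,|\psi_a\rangle\,|s\rangle\,|M\rangle
\;\longmapsto\;
|\psi_a^1\rangle\cdots|\psi_a^N\rangle\,|\psi_a\rangle\,|\psi_a\rangle\,|M_a\rangle,
\]
i.e.\ two copies of that specific state. You instead model a ``sample generator'' that, from the list alone, manufactures a fresh DCP sample $\psi_a^{\beta_a}$ with $\beta_a$ unconstrained. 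Your statement is the stronger one: a cloner in the paper's sense is a special case of your generator (take $m=N+1$ and let the produced sample be a copy of the last list entry), so your proof does establish the theorem as stated.

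\textbf{Endgame.} The paper cancels the nonzero list overlaps to reach
\[
\langle\psi_a\mid\psi_b\rangle=\langle\psi_a\mid\psi_b\rangle^2\,\langle M_a\mid M_b\rangle,
\]
and then reads off the numerical contradiction $|\langle\psi_a\mid\psi_b\rangle|=\tfrac12$ versus $|\langle\psi_a\mid\psi_b\rangle|^2\,|\langle M_a\mid M_b\rangle|\le\tfrac14$. You instead force $|\langle\psi_a^{\beta_a}\mid\psi_{a'}^{\beta_{a'}}\rangle|=1$ and then argue structurally that two DCP samples for distinct parameters can never coincide. The paper's route is shorter because it already knows the two output copies share the same exponent; your extra structural step is exactly the price of allowing $\beta_a$ to be arbitrary. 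Your parenthetical treatment of the ``list may be disturbed'' variant (output overlap $\le 2^{-(m+1)}<2^{-m}$) is a nice bonus the paper does not discuss.
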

\begin{proof} 
Suppose there is a unitary operator $U$ which transforms
\begin{equation}
\label{clone-a}
  U | A \rangle | \psi^1_a \rangle  \cdots | \psi^m_a \rangle | \psi_a \rangle | 0 \rangle  = | \Sigma_a(\psi_a) \rangle |\psi^1_a \rangle \cdots | \psi^m_a \rangle | \psi_a \rangle | \psi_a \rangle ,
\end{equation}
where $\psi_a = \psi_{a;\alpha} = \frac{1}{\sqrt{2}} (|x^\alpha \rangle + | y x^{a-\alpha} \rangle)$ is a DCP sample for $a$ fixed, and $\alpha$ randomly chosen for each such state.
We are supposing $U$ performs the above operation for {\it any} (unknown) $a$. Thus, we also have that
\begin{align}
\label{clone-b}
  U |A \rangle | \psi^1_b \rangle  \cdots | \psi^m_b\rangle | \psi_b \rangle | 0 \rangle  = | \Sigma_b(\psi_b) \rangle |\psi^1_b \rangle \cdots | \psi^m_b \rangle | \psi_b \rangle | \psi_b \rangle .
\end{align}
for any other $b$.

Taking the inner product of both sides of \eqref{clone-a} and \eqref{clone-b} we deduce 
\begin{equation}
\label{normal}
  \langle \psi^1_a | \psi^1_b \rangle \cdots \langle \psi^m_a | \psi^m_b \rangle \langle \psi_a | \psi_b \rangle = \langle \psi^1_a | \psi^1_b \rangle \cdots \langle \psi^m_a | \psi^m_b \rangle \langle \psi_a | \psi_b \rangle^2  \langle \Sigma_a(\psi_a) | \Sigma_b(\psi_b) \rangle.
\end{equation}
However, there are choices of $\psi^i_a, \psi^i_b$ for $i = 1, \ldots, m$, and $\psi_a, \psi_b$ which do not satisfy \eqref{normal} from \eqref{nearly-orthogonal}.

We may thus suppose without loss of generality that $\langle \psi^i_a | \psi^i_b \rangle \not= 0,1$ for all $i = 1, \ldots, N$, and hence \eqref{normal} becomes
\begin{equation*}
  \langle \psi_a | \psi_b \rangle = \langle \psi_a | \psi_b \rangle^2 \langle \Sigma_a(\psi_a) | \Sigma_b(\psi_b) \rangle.
\end{equation*}
We obtain a contradiction again by choosing $\psi_a$ and $\psi_b$ so that $\langle \psi_a | \psi_b \rangle \not= 0, 1$ as then
\begin{align}
  | \langle \psi_a | \psi_b \rangle | = \frac{1}{2}, \\
  | \langle \psi_a | \psi_b \rangle^2 \langle \Sigma_a(\psi_a) | \Sigma_b(\psi_b) \rangle  | \le \frac{1}{4}.
\end{align}
\end{proof}

The following is another proof of Theorem~\ref{no-dcp} using the connection with quantum cloning.
\begin{theorem}
\label{no-dcp-2}
  There is no unitary operation to compute the value of $a$ into a register from a list of DCP samples for $a$.
\end{theorem}
\begin{proof}
Suppose there is a unitary operator $U$ which has the effect
\begin{equation}
    U | A \rangle |\psi^1_a \rangle \cdots | \psi^m_a \rangle |0 \rangle   = | \Sigma_a(\psi_a) \rangle | a \rangle .
\end{equation}
That is, $U$ takes takes a list of DCP samples for fixed but unknown $a$, a blank initialization state $| 0 \rangle$, and an ancilla state $| A \rangle$, and then computes $a$ into the blank register.

Using an additional blank register and copying $| a \rangle$, there is a unitary operator $U'$ with the effect
\begin{equation}
    U' | A \rangle |\psi^1_a \rangle \cdots | \psi^m_a \rangle |0 \rangle |0 \rangle   = | \Sigma_a(\psi_a) | a \rangle |a \rangle .
\end{equation}
Use $U^{-1}$ and permute $| a \rangle$ and $| 0 \rangle$  to obtain
\begin{equation}
   | A \rangle |\psi^1_a \rangle \cdots | \psi^m_a \rangle  |a \rangle |0 \rangle .
\end{equation}

Thus, without loss of generality, we may assume the unitary operator $U$ has the effect
\begin{equation*}
    U | A \rangle |\psi^1_a \rangle \cdots | \psi^m_a \rangle |0 \rangle   = | A \rangle | \psi^1_a \rangle \cdots | \psi^m_a \rangle | a \rangle .
\end{equation*}
That is, $U$ takes takes a list of DCP samples for fixed but unknown $a$, a blank initialization state $| 0 \rangle$, and an ancilla state $| A \rangle$, and then computes $a$ into the blank register, while leaving the list of DCP samples alone.

Now, note that DCP samples $\psi_{a;\alpha}$ can be encoded using two registers as 
\begin{equation*}
  \frac{1}{\sqrt{2}} ( | 0 \rangle | \alpha  \rangle + |1 \rangle | a - \alpha \rangle).
\end{equation*}
The unitary operator $V$ which sends 
\begin{align*}
  V | a \rangle | 0 \rangle | \alpha \rangle & = | a \rangle | 0 \rangle | \alpha \rangle, \\
  V | a \rangle | 1 \rangle | \alpha \rangle & = | a \rangle | 1 \rangle | a - \alpha \rangle,
\end{align*}
will have the effect
\begin{equation*}
  V | a \rangle | \psi_{a;\alpha} \rangle = | a \rangle  \frac{1}{\sqrt{2}} \left( |0 \rangle  + | 1 \rangle \right)  | \alpha \rangle.
\end{equation*}
Using a Hadamard gate, we can encode a unitary operator $U_0$ such that
\begin{align*}
  U_0  \frac{1}{\sqrt{2}} \left( |0 \rangle  + | 1 \rangle \right) | \alpha \rangle = | 0 \rangle | \alpha \rangle \\
  U_0  \frac{1}{\sqrt{2}} \left( |0 \rangle  - | 1 \rangle \right) | \alpha \rangle = | 1 \rangle | \alpha \rangle \\
\end{align*}
Then the unitary operator $(I \otimes U_0) V$ has the effect
\begin{equation*}
  (I \otimes U_0) V | a \rangle | \psi_{a;\alpha} \rangle = | a \rangle |0 \rangle | \alpha \rangle. 
\end{equation*}

We can thus apply Proposition~\ref{copy-orthogonal2} to copy an additional DCP sample for the same $a$ using a unitary operation, while leaving the list of DCP samples alone. This contradicts Theorem~\ref{no-dcp-clone}.
\end{proof}

%The proof of Theorems~\ref{no-dcp-clone} and \ref{no-dcp} are based on the no cloning theorem \cite{wootters-zurek}, which preclude unitary operations, but not more general quantum algorithms,  which may allow for approximate outputs, probabilistic processes, or post-processing. Indeed, computing the exact value of $a$ into a register is rather strong: even in the finite cyclic group case, the standard algorithm only determines a generator for the hidden subgroup using a probabilistic process.

%On the other hand, Proposition~\ref{dcp-cloning} has the consequence that advances in quantum cloning algorithms would have implications for the DCP problem.


\begin{thebibliography}{9}



\bibitem{childs} D. Bacon, A. Childs, and W. van Dam, Optimal measurements for the dihedral hidden subgroup problem, \emph{Chicago J. Theoretical Comp. Sci.} \textbf{2006} (2006), article 2.


\bibitem{beals} R. Beals, Quantum computation of Fourier transforms over symmetric groups, in {\em Proceedings 29th Annual ACM Symposium on Theory of Computing (El Paso, Texas)}, ACM Press, 1997.

\bibitem{biasse-jao-sankar} JF. Biasse, D. Jao, and A. Sankar, A quantum algorithm for computing isogenies between supersingular elliptic curves, in Meier, W., Mukhopadhyay, D. (eds) Progress in Cryptology -- INDOCRYPT 2014. INDOCRYPT 2014. Lecture Notes in Computer Science, vol 8885. Springer, Cham. 

\bibitem{boneh-lipton} R. Boneh and R. Lipton, \emph{Quantum cryptoanalysis of hidden linear functions}, Advances in Cryptology - Crypto '95, Lecture Notes in Computer Science,\textbf{963}, Springer-Verlag, Berlin, 1995, pp. 424--437.

\bibitem{BKSW} Z. Brakerski, E. Kirshanova, D. Stehl\'e, W. Wen, Learning with Errors and Extrapolated Dihedral Cosets. In: Abdalla, M., Dahab, R. (eds) Public-Key Cryptography – PKC 2018. PKC 2018. Lecture Notes in Computer Science, vol 10770. Springer, Cham. 

\bibitem{childs-jao-soukharev} A. Childs, D. Jao, and V. Soukharev, Constructing elliptic curve isogenies in quantum subexponential time, \emph{Journal of Mathematical Cryptology}, \textbf{8} (2014), 1--29.

\bibitem{ettinger-hoyer} M. Ettinger and P. H\o yer,  On quantum algorithms for noncommutative hidden subgroups, \emph{Advances in Applied
Mathematics} \textbf{25} (2000), 239–251.

\bibitem{ettinger-hoyer-2} M. Ettinger and P. H\o yer, The quantum query complexity of the hidden subgroup problem is polynomial, \emph{Information
Processing Letters} \textbf{91} (2004), no. 1, 43–48.

\bibitem{ettinger-hoyer-3} M. Ettinger and P. H\"oyer, Quantum state detection via elimination, ArXiv preprint, \href{https://arXiv:quant-ph/9905099}{arXiv:quant-ph/9905099}, 1999.

\bibitem{ettinger-hoyer-4} M. Ettinger and P. H\"oyer, A quantum observable for the graph isomorphism problem, ArXiv preprint, \href{https://arXiv:quant-ph/9901029}{arXiv:quant-ph/9901029}, 1999.

\bibitem{hoyer} P. H\"oyer, Efficient quantum transforms, ArXiv preprint, \href{https://arXiv:quant-ph/9702028}{arXiv:quant-ph/9702028}, 1997.

\bibitem{GSVV} M. Grigni, L. Schulman, M. Vazirani, and U. Vazirani, Quantum mechanical algorithms for the nonabelian hidden subgroup problem, \emph{Combinatorica}, \textbf{24} (1) (2004), 137--154.

\bibitem{HRT} S. Hallgren, A. Russell, and A. Ta-Shma, The hidden subgroup problem and quantum computation using group representations, \emph{SIAM J. Comput.}, \textbf{32} (4) (2003), 916--834.

\bibitem{hales-hallgren} L. Hales and S. Hallgren, Improved quantum Fourier transform algorithm and applications, in {\em Proceedings of the 41st Annual Symposium on Foundations of Computer Science  (Redondo Beach, California)}, FOCS, 2000.

\bibitem{kitaev} A. Kitaev, Quantum computations: Algorithms and error correction, \emph{Russian Math. Surveys},
\textbf{52} (1997), 1191–1249.

\bibitem{kobayashi} H. Kobayashi and F. Le Gall, Dihedral hidden subgroup problem: a survey, \emph{IPSJ Journal}, \textbf{46} (10) (2005), 2409--2416.

\bibitem{kuperberg} G. Kuperberg, A subexponential-time quantum algorithm for the dihedral hidden subgroup problem, \emph{SIAM J. Computing}, \textbf{35} (1) (2005), 170--188.

\bibitem{kuperberg-2} G. Kuperberg, Another subexponential-time quantum algorithm for the dihedral hidden subgroup problem, 8th Conference on the theory of quantum computation, \emph{Communication and Cryptography}, \textbf{22} (2013), 20-34.

\bibitem{moore-russell} C. Moore and A. Russell, For Distinguishing Conjugate Hidden Subgroups, the Pretty Good Measurement is as Good as it Gets, \emph{Quantum Information and Computation}, \textbf{7} (2007), 752--765. 

\bibitem{nymann} J. Nymann, On the probability that $k$ positive integers are relatively prime, \emph{J. Number Theory}, \textbf{4} (1972), 469-473.


\bibitem{regev} O. Regev. Quantum computation and lattice problems. \emph{SIAM Journal on Computing}, \textbf{33} (3) (2004), 738–760.

\bibitem{regev-2} O. Regev, A subexponential time algorithm for the dihedral hidden subgroup problem with polynomial space, ArXiv preprint, \href{https://arXiv:quant-ph/0406151}{arXiv:quant-ph/0406151}, 2004.


\bibitem{shor} P. Shor. Algorithms for quantum computation: discrete logarithms and factoring, \emph{Proceedings of the 35th Annual Symposium on Fundamentals of Comp. Science (FOCS)}, 1994, 124--134.

\bibitem{wootters-zurek} W.K.\ Wootters and W.H. Zurek. A single quantum cannot be cloned. \emph{Nature} \textbf{299} (1982), 802--803.


\end{thebibliography}
\end{document}